\newtheorem{corollary}{Corollary}
\newtheorem{lemma}{Lemma}
\theoremstyle{definition}
\newtheorem{remark}{Remark}
\newtheorem{note}{Note}
\newtheorem{definition}{Definition}
\newcommand{\MK}[1]{{\color{black} {#1}}}
\def\BibTeX{{\rm B\kern-.05em{\sc i\kern-.025em b}\kern-.08em
    T\kern-.1667em\lower.7ex\hbox{E}\kern-.125emX}}
\begin{document}

\title{Spectrum Coexistence of Satellite-borne Passive Radiometry and Terrestrial Next-G Networks
}

\author{ 
\IEEEauthorblockN{Mohammad Koosha and Nicholas Mastronarde}
\IEEEauthorblockA{\textit{University at Buffalo, Department of Electrical Engineering}\\ Email: \{mkoosha, nmastron\}@buffalo.edu}
\thanks{The work of M. Koosha and N. Mastronarde was supported in part by the NSF under Award \#2030157.}
}
\maketitle

\begin{abstract}
Spectrum coexistence between terrestrial Next-G cellular networks and space-borne remote sensing (RS) is now gaining attention. One major question is how this would impact RS equipment. In this study, we develop a framework based on stochastic geometry to evaluate the statistical characteristics of radio frequency interference (RFI)
originating from a large-scale terrestrial Next-G network operating in the same frequency band as an RS satellite. For illustration, we consider a network operating in the restricted L-band (1400-1427 MHz) with NASA's Soil Moisture Active Passive (SMAP) satellite, which is one of the latest RS satellites active in this band. We use the Thomas Cluster Process (TCP) to model RFI from clusters of  cellular base stations on SMAP's antenna's main- and side-lobes. We show that a large number of active clusters can operate in the restricted L-band without compromising SMAP's mission if they avoid interfering with the main-lobe of its antenna. This is possible thanks to SMAP's extremely low side-lobe antenna gains.

\end{abstract}

\begin{IEEEkeywords}
Restricted L-band, Active-passive Spectrum Coexistence, SMAP, Interference Modeling, Large-Scale Terrestrial Network, 
Stochastic Geometry, Soil Moisture.

\end{IEEEkeywords}

\section{Introduction}
The spectrum crunch has fostered extensive research on the coexistence of different wireless technologies within the same spectrum. One case that has been gaining attention in recent years involves the use of passive Radio Frequency (RF) bands, which are solely devoted to passive sensing applications such as remote sensing and radio astronomy, for active wireless communications. Specifically, the coexistence of terrestrial active wireless communications and Earth Exploration Satellite Services (EESS) is becoming a central topic of discussion \cite{polese2021coexistence}. A major question that needs to be answered is how and to what extent Radio Frequency Interference (RFI) would impact such EESS satellites.


\MK{While current research primarily examines spectrum sharing between terrestrial cellular networks and terrestrial passive sensing technologies, such as the Radio Dynamic Zones (RDZ) proposed by \cite{zheleva2023radio} and the Shared Spectrum Access Zones (SSAZ) for the coexistence of cellular wireless communications and terrestrial radio astronomy systems proposed by \cite{ramadan2017new},} in this study, we develop a mathematical framework to model the RFI originating from a large-scale terrestrial cellular network and its impact on an EESS satellite. Specifically, we imagine clusters of cellular base stations exposed to an EESS satellite, where each cluster has a number of base stations active in the same frequency band as the satellite. To account for the randomness of the position of the clusters on Earth and the number of active cells within a cluster, we use the Thomas Cluster Process (TCP) from stochastic geometry \cite{afshang2016modeling}.  

For illustration, we develop our model based on the National Aeronautics and Space Administration (NASA) Soil Moisture Active Satellite (SMAP) \cite{entekhabi2014smap}, which is one of the latest RS satellites active in the restricted L-band ($1400-1427$ MHz). We develop the characteristic function of RFI at both the main- and side-lobes of SMAP's antenna. Using the characteristic function, we then derive the statistical properties of the RFI, namely, its average, variance, 
\MK{and higher central moments}
We also demonstrate that, due to the very low side-lobe gains of SMAP's antenna, a large number of terrestrial clusters can be active while exposed to SMAP's side-lobe without compromising the accuracy of SMAP's measurements. 

\MK{The paper is organized as follows. In Section \ref{sec:Preliminaries} we introduce SMAP's measurement mechanism, our methodology for RFI analysis, and other preliminaries. Section \ref{sec:RFI_Analysis} is the main RFI analysis section. Section \ref{sec:Results} is the simulations and results section. Section \ref{sec:conclusion} is the conclusion section.}


\section{Preliminaries} \label{sec:Preliminaries}
\subsection{SMAP \& Brightness Temperature}
As depicted in Figure~\ref{fig:smap_characteristics}, SMAP has a 6-meter-wide conically-scanning golden mesh reflector with a 3-dB antenna beam-width of $2.4{\degree}$ that projects a footprint of roughly $40 \times 40$ km$^2$ \MK{with an incident angle of $40^{\degree}$ at an altitude of $685$ $km$}. An Ortho-Mode Transducer (OMT) feedhorn collects the reflected radiations from the mesh reflector and duplexes them separately into \textit{vertical} and \textit{horizontal} polarizations. Figure~\ref{fig:SAG} shows a 2-dimensional cut of SMAP's antenna gain for the vertical polarization.
Through sectorization, which is a common method in stochastic geometry, we represent SMAP's antenna gain for each polarization $(p)$ as:
\begin{equation}
g=\left\{
\begin{array}{ll}
    g_{(ml)} , & \mbox{if } |d| \leq 1.2{\degree},  \\
    g_{(sl)}, & \mbox{if } |d| > 1.2{\degree},
\end{array}    
\right. \label{eq:G_SMAP}
\end{equation}
where $(ml)$ and $(sl)$ stand for \textit{main-lobe} and \textit{side-lobe}, respectively, and $|d|$ is the deviation from the main-lobe axis. For each polarization $(p)$, SMAP separately captures the \textit{brightness temperature} of soil, $t_{soil}^{(p)}$ \MK{(in Kelvin)}, from the antenna footprint by capturing the soil's natural passive thermal radiations. These brightness temperature measurements can be translated to soil moisture content using models like the Tau-Omega mode \cite{entekhabi2014smap}. We use the Nyquist noise formula \cite{turner2012johnson} to convert electromagnetic power to brightness temperature as:
\begin{equation}
t^{(p)}=\frac{\mathrm{p}^{(p)}}{k_{b}\beta},
\end{equation}
where $\mathrm{p}^{(p)}$ is the electromagnetic power received by polarization $(p)$, $k_b$ is the \textit{Boltzmann} constant, and $\beta$ is the radio frequency \textit{bandwidth}. 

\begin{note}
    Due to the symmetrical nature of SMAP's antenna gain for both polarizations, as well as the symmetry in the RFI scenario, we assume identical RFI characteristics for both of SMAP's polarizations. Thus, the discussions that follow hold true for SMAP's measurements in both polarizations.
\end{note}
\begin{figure}
\centering
  \includegraphics[width=0.95\linewidth]{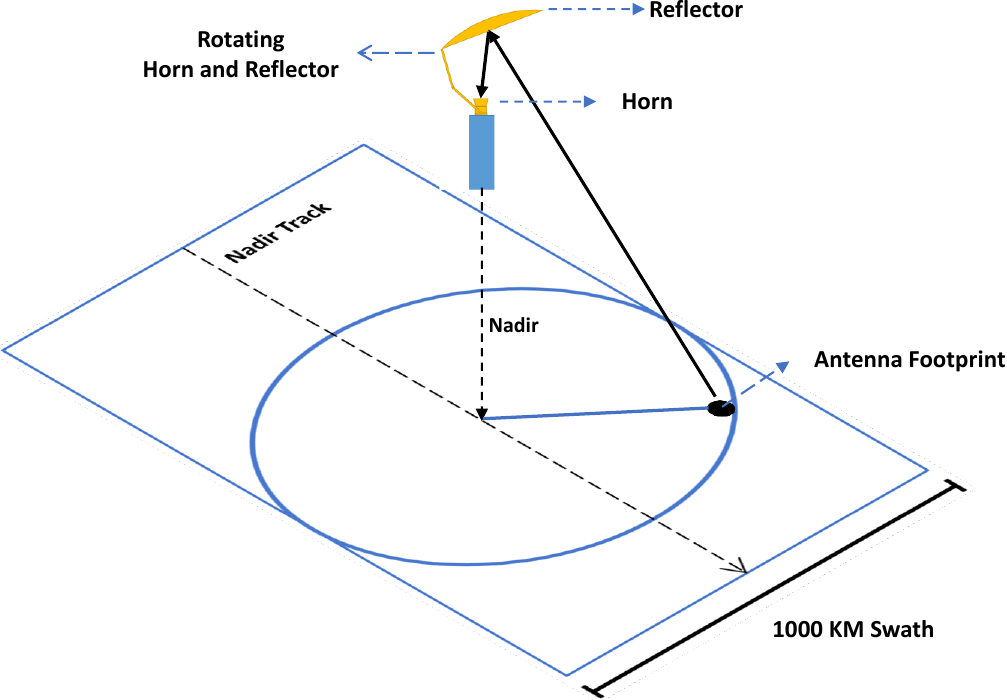}
  \caption{Horn and reflector rotation, beam footprint of the reflector, and SMAP's nadir track.}
  \label{fig:smap_characteristics}
\end{figure} 

\begin{figure}
\centering
  \includegraphics[width=0.95\linewidth]{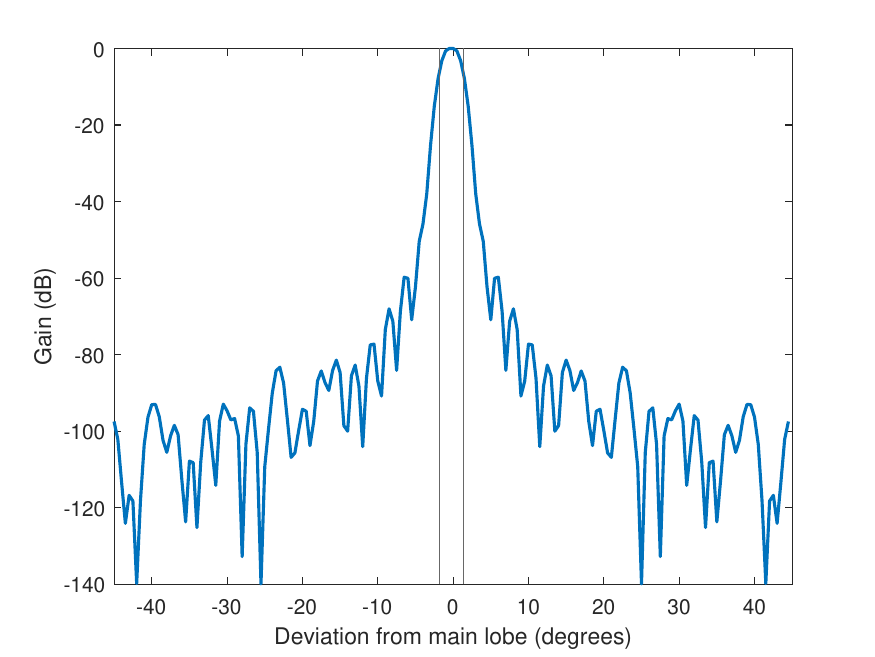}
  \caption{A 2-dimensional cut of SMAP's conical antenna gain for the vertical polarization. The gain for the horizontal polarization is similar. The two vertical lines show the $2.4^{\degree}$ antenna beam-width.}
  \label{fig:SAG}
\end{figure}

\subsection{Methodology for RFI Analysis}
In this section, we provide a concise explanation of the underlying logic guiding our analysis of RFI on SMAP. SMAP's measurements for each polarization can be seen as:
\begin{equation}
    t_{meas}=t_{soil}+T_{RFI} \label{eq:T_meas},
\end{equation}
where $T_{RFI}$ denotes the RFI temperature at SMAP. Since $T_{RFI}$ is a random variable, it causes uncertainty in SMAP's measurements. According to SMAP's documentation, uncertainties below threshold value $\tau = 1.3K$ are acceptable for SMAP's measurements \cite{entekhabi2014smap}. To model $T_{RFI}$ from a large terrestrial network, we imagine a set of $\{i\}$ Base Station (BS) clusters on the Earth-cut exposed to SMAP, where each cluster $i$ comprises a set of $\{j\}_i$ BSs, each BS$_{ij}$ has a (maximum) total electromagnetic transmission power $p_{tx}$, and $P_{ij}$ is the amount of power received by SMAP from BS$_{ij}$. Clusters $\{i\}_{(ml)} \subset \{i\}$ are located on SMAP's main-lobe antenna footprint and clusters $\{i\}_{(sl)} \subset \{i\}$ are exposed to SMAP's side-lobe. Accordingly, $T_{RFI}$ in \eqref{eq:T_meas}, can be decomposed into its main- and side-lobe components as:
\begin{align}
    &T_{RFI} = T_{(ml)}+T_{(sl)}, \label{eq:T_RFI}
\end{align}
where 
\begin{align}
    &T_{(l)}=\sum\nolimits_{\{i\}_{(l)}}\sum\nolimits_{\{j\}_i} T_{ij} \label{eq:T_RFI_l} \\
    \text{with} \quad &T_{ij}=\frac{P_{ij}}{k_{b}\beta} \label{eq:P_ij_to_T_ij}
\end{align}
where $(l)$ is either $(ml)$ or $(sl)$. Assessing each $P_{ij}$ would be a complicated function of BS$_{ij}$ antenna angles, obstructions in the environment, and SMAP's elevation angle relative to BS$_{ij}$. However, based on Free Space Path Loss (FSPL), we note that:
\begin{equation}
    P_{ij} \in \left[0,\; g\left(\frac{c}{4\pi f d_{ij}} \right)^{\alpha}p_{tx}\right]  \label{eq:Max_P_ij}  
\end{equation}
and therefore, according to \eqref{eq:P_ij_to_T_ij}, 
\begin{equation}
    T_{ij} \in \left[0,\; g\eta \omega^{\alpha} d_{ij}^{-\alpha}\right] \label{eq:T_ij_range},
\end{equation}
where $\eta=\frac{p_{tx}}{k_{b}\beta}$, $\omega=\left( \frac{c}{4\pi f} \right)$, $g$ is the gain of SMAP's antenna (based on \eqref{eq:G_SMAP}, $g = g_{(ml)}$ if $i \in \{i\}_{(ml)}$, and $g = g_{(sl)}$ if $i \in \{i\}_{(sl)}$), $c$ is the  speed of light, $f$ is the frequency, $d_{ij}$ is the distance of BS$_{ij}$ to SMAP, and $\alpha>2$ is the path loss exponent. In \eqref{eq:Max_P_ij}, we ignore atmospheric loss since it is proven to have a negligible effect in the L-band \cite{entekhabi2014smap}; however, this means that our model slightly overestimates the RFI.

\MK{We acknowledge that transient phenomena like fading and shadowing can cause temporary spikes in electromagnetic power. SMAP's documentation indicates a full-band integration time of $300~\micro s$ and a sub-band integration time of approximately $1.2~ms$, emphasizing the need to address fast-fading effects. While fast-fading models could establish an upper limit for $P_{ij}$ with confidence, we have opted to omit this in our current study for future research. Instead, for assessing maximum RFI on SMAP, we use $T_{ij}:=g\eta \omega^{\alpha} d_{ij}^{-\alpha}$ in \eqref{eq:T_RFI_l}. This is akin to assuming the worst-case scenario, where all the transmission power from a single base station is directed solely towards the satellite.}

\subsection{Geometric Assumptions}
As depicted in Figure~\ref{fig:smap_exposed}, the Earth's center is the \textit{origin} $(0,0,0)$ and SMAP is located at the point $\mathtt{h}=(0,0,h)$, where $h$ is the distance of SMAP from the Earth's center. 
\begin{figure}
\centering
  \includegraphics[width=0.95\linewidth]{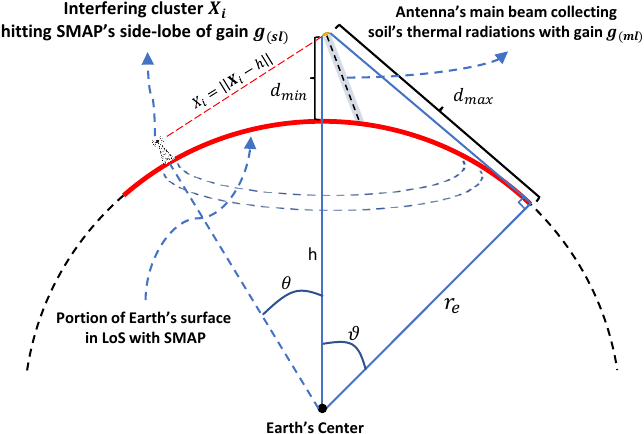}
  \caption{2-dimensional representation of the Earth-cut exposed to the satellite. The red curve represents the maximum extent of the Earth-cut exposed.}
  \label{fig:smap_exposed}
\end{figure}The area exposed to the satellite, shown with the red cap 
 and encircled in $\theta \in [0, \vartheta]$ in Figure~\ref{fig:smap_exposed}, is defined with the \textit{Borel-set} $\mathcal{B}=\left \{\lVert \mathtt{x} \lVert = r_e, \; \frac{\mathtt{x} \cdot \mathtt{h}}{\lVert \mathtt{x} \lVert \lVert \mathtt{h} \lVert} \geq \cos{(\vartheta)}  \right\}$ in $\mathbb{R}^{3}$ measure space, where $r_e$ is the Earth radius, $\lVert \cdot \lVert$ is the Euclidean norm, and $\cos{(\vartheta)}=\frac{r_e}{h}$. We define $\mathcal{B}_{(ml)} \subset \mathcal{B}$ as SMAP's antenna footprint projected on Earth 
 and $\mathcal{B}_{(sl)}=\mathcal{B} \big\backslash \mathcal{B}_{(ml)}$ as the set of area exposed to the side-lobe of SMAP's antenna. 
 
 Let $\Psi = \{\bold{X}_i, i \in \mathbb{N}\} \subset \mathcal{B}$ denote a parent \textit{Poisson Point Process} (PPP) with intensity measure $\Lambda(d\mathtt{x})$, where $\bold{X}_i$ denotes a cluster center. For simplicity we use a \textit{homogeneous} PPP such that $\Lambda(d\mathtt{x})=\lambda_c$. \MK{According to Figure \ref{fig:smap_exposed}, the minimum distance of a cluster center to SMAP is $d_{min}=h-r$, while the maximum distance is $d_{max}=\sqrt{h^2-r_e^2}$.} With $\psi=\{\mathtt{x}_i\}$ defined as a realization of $\Psi$, for each point $\mathtt{x} \in \psi$, we associate an \textit{independent and identically distributed} (i.i.d) offspring PPP $\Xi_\mathtt{x}$. Each cluster $\Xi_\mathtt{x}$ consists of $N$ i.i.d random points (BSs) where, according to a \textit{Thomas Cluster Process} (TCP), $N \sim \textsf{Pois}{(\lambda_{BS})}$. We define $\Phi \sim  \mathcal{P} (\lambda_c, \lambda_{BS})$, where
\begin{equation}
    \Phi=\bigcup\nolimits_{\bold{x} \in \Psi} \left(\bold{x} + \Xi_{\bold{x}} \right).
\end{equation}
Accordingly, the cluster centers located in SMAP's main- and side-lobes are respectively $\Psi_{(ml)}=\Psi(\mathcal{B}_{(ml)})$ and $\Psi_{(sl)} = \Psi \big \backslash \Psi_{(ml)}$, and their respective clusters are $\Phi_{(ml)}$ and $\Phi_{(sl)}$.

\begin{note}
    Defining $X_i= \lVert \bold{X}_i - \mathtt{h} \lVert$ as the distance of terrestrial cluster center $\bold{X_i}$ to SMAP and $x_i$ as its realization, we assume that the cluster's dispersion $\ll x_i$, since an urban area is on the order of a few kilometers, while the cluster distance to SMAP is more than $d_{min}=685$ km. 
    Thus, for simplicity, we assume that the off-spring (BSs) within each cluster are equidistant (with distance $x_i$) to SMAP.
\end{note}

\section{RFI Analysis} \label{sec:RFI_Analysis}
To assess the RFI brightness temperature $T_{RFI}$ on SMAP as defined in \eqref{eq:T_RFI}, we analyze the statistical properties of RFI on SMAP's main- and side-lobes, i.e., its average, variance, and \MK{higher central moments} (skewness and kurtosis). For this purpose, we use the concepts of Cumulants and Cumulant Generating Functions (CGFs).

  \vspace{-2mm}
\begin{definition}
The CGF of random variable $X$ is defined as:
\begin{equation}
    K(t) = \log \mathds{E}[e^{tX}],
\end{equation}
which is the $\log$ of the \textit{Moment Generating Function} (MGF) $M(t) = \mathds{E}[e^{tX}]$ of random variable $X$. Accordingly the $n$th \textit{cumulant} of $X$ is as follows:
\begin{equation}
    k_{n} = K^n(0),
\end{equation}
where $K^n(t)$ denotes the $n$th derivative of $K(t)$.
\end{definition} 

  \vspace{-3mm}
\begin{remark}
    For random variable $X$, $k_1=\mu_0$, where $\mu_0$ is the first moment of $X$, i.e., $\mathds{E}[X]$. For $n \in \{ 2,3\}$, $k_n=\mu_n$, where $\mu_n$ is the $n$th central moment of $X$. Consequently, the variance of $X$ is its $2$nd central moment, i.e., $k_2 = \mu_2$. Lastly, higher order central moments $\mu_{n}$ for $n>3$ can be acquired by a combination of cumulants of $X$. For example, $\mu_4=k_4+3(k_2)^2$.
\end{remark}
Based on Definition 1 and Remark 1, we shift our focus on finding the MGF of $T_{(ml)}$ and $T_{(sl)}$ defined in \eqref{eq:T_RFI}. For this purpose, we start with the MGF of RFI brightness temperature of one cluster of BSs. 

  \vspace{-1mm}
\subsection{MGF for one Cluster}
One key quantity that can help us determine the total RFI brightness temperature at SMAP is the maximum RFI brightness temperature contributed by \textit{one} cluster. Assuming the cluster is located at point $\mathtt{x}\in \psi$ and comprises $N$ BSs that are equidistant to SMAP, we have:
  \vspace{-2mm}
\begin{equation}
    T_{cluster}(\mathtt{x}) = g  \eta \omega^{\alpha} \lVert \mathtt{x} -\mathtt{\MK{h}} \lVert^{-\alpha} N. \label{eq:T_cluster}
\end{equation}

  \vspace{-2mm}
\begin{lemma}
    For a cluster located at point $\mathtt{x}\in \psi$ with $N \sim \textsf{Pois}(\lambda_{BS})$ equidistant BSs (with distance $x=\lVert \mathtt{x}-\mathtt{h}\lVert$) to the satellite, the MGF of \eqref{eq:T_cluster} is as follows:
    \begin{multline}
    M_{cluster}(t;\; x, \;g) = \\ \exp \left( 
        \lambda_{BS}\left(-1+\exp \left( g  \eta \omega^{\alpha} x^{-\alpha} t \right) \right)
    \right). \label{eq:MGF_Cluster}
    \end{multline}
\end{lemma}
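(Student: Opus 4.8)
The plan is to note that, once the cluster location $\mathtt{x}$ and the antenna gain $g$ are fixed, the coefficient $a := g\,\eta\,\omega^{\alpha} x^{-\alpha}$ multiplying $N$ in \eqref{eq:T_cluster} is a deterministic constant, so $T_{cluster}(\mathtt{x}) = aN$ is nothing but a scaled Poisson random variable. The MGF then follows immediately from the well-known moment (equivalently, probability) generating function of a Poisson variable.

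Concretely, I would first write $M_{cluster}(t;\,x,\,g) = \mathds{E}\!\left[e^{tT_{cluster}(\mathtt{x})}\right] = \mathds{E}\!\left[e^{t a N}\right] = \mathds{E}\!\left[(e^{ta})^{N}\right]$, which is exactly the probability generating function of $N$ evaluated at $z = e^{ta}$. Next, using $N \sim \textsf{Pois}(\lambda_{BS})$ and summing the series, $\mathds{E}[z^{N}] = e^{-\lambda_{BS}}\sum_{n\ge 0}(\lambda_{BS}z)^{n}/n! = e^{-\lambda_{BS}}e^{\lambda_{BS}z} = \exp\!\left(\lambda_{BS}(z-1)\right)$, which holds for every $z \ge 0$ — in particular for $z = e^{ta} > 0$ at all real $t$. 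Substituting $z = e^{ta}$ with $a = g\eta\omega^{\alpha}x^{-\alpha}$ yields precisely \eqref{eq:MGF_Cluster}. An equivalent route, and the one that will generalize to the multi-cluster MGFs computed later, is to regard $T_{cluster}(\mathtt{x}) = \sum\nolimits_{j=1}^{N} T_{ij}$ as a compound Poisson sum whose summands all equal the deterministic value $a$ (under the equidistant-BS assumption made earlier), and to invoke the standard compound-Poisson MGF $\exp\!\left(\lambda_{BS}(M_{T_{ij}}(t) - 1)\right)$ with the degenerate $M_{T_{ij}}(t) = e^{ta}$; both derivations give the same closed form.

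There is no genuine obstacle in this lemma: the only point deserving a line of justification is that the Poisson MGF is finite for all $t$, so that interchanging expectation with the exponential series is legitimate — this holds because the Poisson moment series has infinite radius of convergence and $e^{ta}$ is finite for every real $t$. The substantive modeling step, namely collapsing an entire cluster's RFI contribution into a single scaled Poisson variable via the equidistant-BS assumption, has already been carried out in \eqref{eq:T_cluster}; this lemma simply records the resulting MGF, which then serves as the elementary building block for the main-lobe and side-lobe MGFs derived next.
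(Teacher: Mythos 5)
Your proof is correct and follows essentially the same route as the paper: the paper simply observes that $T_{cluster}(\mathtt{x}) = aN$ with $a = g\eta\omega^{\alpha}x^{-\alpha}$ deterministic and substitutes $ta$ into the Poisson MGF $\exp\left(\lambda_{BS}(e^{t}-1)\right)$, which is exactly your main argument (your compound-Poisson remark is an equivalent restatement). No gaps.
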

\begin{proof} Since $N$ is a Poisson random variable, the MGF of $N$ is $M_{N}(t)=\exp \left(\lambda_{BS}(e^t-1) \right)$. By setting $t:=g\eta \omega^{\alpha}x^{-\alpha} t$, we acquire \eqref{eq:MGF_Cluster}.
\end{proof}

$T_{\text{cluster}}(\mathtt{x})$ in \eqref{eq:T_cluster} is the fundamental unit of RFI brightness temperature in our model. We obtain its series expansion to facilitate the calculation of RFI brightness temperature cumulants later on.

   \vspace{-2mm}
\begin{lemma}
    The series expansion of \eqref{eq:MGF_Cluster} is as follows:
    \vspace{-2mm}
    \begin{equation}
        M_{cluster}(t;\; x, \; g)=\sum_{n=0}^{\infty} p_{n}(\lambda_{BS}) \left( g \eta \omega^{\alpha} x^{-\alpha}  \right)^n \frac{t^n}{n!}, \label{eq:MGF_Cluster_Series}
          \vspace{-2mm}
    \end{equation}
    with:
    
      \vspace{-2mm}
    \begin{equation}
        p_n(\lambda_{BS})=\sum\nolimits_{i=0}^{n} S(n,i) \lambda_{BS}^i, \label{eq:p_n}
    \end{equation}
    where $S(n,i)$ is the Stirling number of the second kind.
\end{lemma}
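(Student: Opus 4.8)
The plan is to recognize \eqref{eq:MGF_Cluster} as the moment generating function of the Poisson variable $N$ evaluated at a rescaled argument, expand it as a power series in $t$, and identify the Taylor coefficients with the moments of $N$, which are exactly the Touchard (Bell) polynomials $p_n(\lambda_{BS})$.

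Concretely, I would write $a := g\eta\omega^{\alpha}x^{-\alpha}$ for brevity, so that $M_{cluster}(t;x,g)=\exp\!\big(\lambda_{BS}(e^{at}-1)\big)=M_N(at)$, where $M_N(s)=\mathds{E}[e^{sN}]=\exp(\lambda_{BS}(e^{s}-1))$ is the MGF of $N\sim\textsf{Pois}(\lambda_{BS})$ used in the proof of Lemma 1. Since $s\mapsto\exp(\lambda_{BS}(e^{s}-1))$ is a composition of entire functions, it is entire, hence equal to its Taylor series about $0$ for every $s$; its Taylor coefficients are $\tfrac{1}{n!}M_N^{(n)}(0)=\tfrac{1}{n!}\mathds{E}[N^n]$. (Equivalently, one expands $e^{sN}=\sum_{n}N^{n}s^{n}/n!$ and interchanges sum and expectation, which is justified by dominated convergence using finiteness of $M_N$ on a neighborhood of $0$.) Substituting $s=at$ gives $M_{cluster}(t;x,g)=\sum_{n=0}^{\infty}\mathds{E}[N^n]\,a^{n}\,t^{n}/n!$, so it remains only to show $\mathds{E}[N^n]=p_n(\lambda_{BS})$.

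For the moment identity I would invoke the combinatorial definition of the Stirling numbers of the second kind, $y^{n}=\sum_{i=0}^{n}S(n,i)\,(y)_i$ with $(y)_i=y(y-1)\cdots(y-i+1)$ the falling factorial, apply it with $y=N$, and take expectations: $\mathds{E}[N^n]=\sum_{i=0}^{n}S(n,i)\,\mathds{E}[(N)_i]$. The $i$th factorial moment of a Poisson variable is $\mathds{E}[(N)_i]=\lambda_{BS}^{i}$, which follows from the one-line computation $\sum_{k\ge i}\frac{k!}{(k-i)!}e^{-\lambda_{BS}}\frac{\lambda_{BS}^{k}}{k!}=\lambda_{BS}^{i}e^{-\lambda_{BS}}\sum_{m\ge0}\frac{\lambda_{BS}^{m}}{m!}=\lambda_{BS}^{i}$. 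Hence $\mathds{E}[N^n]=\sum_{i=0}^{n}S(n,i)\lambda_{BS}^{i}=p_n(\lambda_{BS})$, and plugging this together with $a=g\eta\omega^{\alpha}x^{-\alpha}$ back into the series yields exactly \eqref{eq:MGF_Cluster_Series}--\eqref{eq:p_n}.

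The routine parts are the convergence interchange (immediate, since $M_N$ is entire) and the Poisson factorial-moment calculation. The only mild obstacle is the Stirling-number bookkeeping: one must use the correct identity expressing ordinary powers in terms of falling factorials (not the inverse relation, which involves Stirling numbers of the first kind) and check that the index range $0\le i\le n$ matches the stated form of $p_n$. An alternative that sidesteps moments entirely is to cite the known exponential generating function of the Touchard polynomials, $\exp(\lambda(e^{s}-1))=\sum_{n\ge0}B_n(\lambda)s^{n}/n!$ with $B_n(\lambda)=\sum_{i=0}^{n}S(n,i)\lambda^{i}$, and substitute $s=at$; I would note this as a shortcut but keep the moment-based derivation as the primary, self-contained argument.
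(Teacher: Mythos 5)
Your proposal is correct, and its primary argument takes a genuinely more self-contained route than the paper. The paper simply cites the known exponential generating function of the Bell (Touchard) polynomials, $\exp\left(v(e^{t}-1)\right)=\sum_{n\ge 0}B_n(v)\,t^n/n!$ with $B_n(v)=\sum_{i=0}^{n}S(n,i)v^i$, and substitutes $v=\lambda_{BS}$ and the rescaled argument $g\eta\omega^{\alpha}x^{-\alpha}t$ --- which is exactly the ``shortcut'' you flag at the end. Your main derivation instead proves that identity from first principles: you identify the Taylor coefficients of the Poisson MGF with the raw moments $\mathds{E}[N^n]$, convert powers to falling factorials via $y^n=\sum_{i=0}^{n}S(n,i)(y)_i$, and use the Poisson factorial moments $\mathds{E}[(N)_i]=\lambda_{BS}^{i}$ to conclude $\mathds{E}[N^n]=p_n(\lambda_{BS})$; the rescaling step $t\mapsto at$ then matches the paper's substitution. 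What the paper's route buys is brevity (one citation, one substitution); what yours buys is a self-contained, elementary argument that also makes explicit why the coefficients $p_n(\lambda_{BS})$ are precisely the moments of $N$, which is conceptually useful given that the lemma feeds into the cumulant computations later. Your attention to using the correct direction of the Stirling-number identity (second kind, powers in terms of falling factorials) and to justifying the sum--expectation interchange via entirety of the MGF is sound and fills in details the paper leaves implicit.
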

\begin{proof}
    From \cite{hwang2019differential}, we know that:
    \begin{equation}
        \exp \left(v(e^{t}-1) \right) = \sum_{n=0}^{\infty} B_n(v)\frac{t^n}{n!}, \label{eq:exp_v}
    \end{equation}
    where $B_{n}(v)$ is the Bell polynomial of order $n$, and can be expanded as:
    \begin{equation}
        B_{n}(v)=\sum\nolimits_{i=0}^{n}S(n,i)v^i.
    \end{equation}
    By setting $t:=g \eta \omega^{\alpha}x^{-\alpha}t$ and $v=\lambda_{BS}$ in \eqref{eq:exp_v}, we acquire \eqref{eq:MGF_Cluster_Series}.
\end{proof}

\subsection{MGF on SMAP's Main- and Side-lobes}

\subsubsection{SMAP's main-lobe} The RFI brightness temperature $T_{(ml)}$ on SMAP's main-lobe is caused by all the clusters $\mathtt{x} \in \psi_{(ml)}$ in SMAP's main-lobe antenna footprint: i.e., 
\begin{equation}
    T_{(ml)}= \sum\nolimits_{\bold{X}_i \in \psi_{(ml)}} T_{cluster}(\bold{X}_i). \label{eq:T_RFI_ml}
\end{equation}
\begin{note}
    Given that SMAP's antenna footprint is relatively small compared to the distance to the satellite, we assume a uniform distribution, envisioning that all cluster centers within the main-lobe antenna footprint ($\mathtt{x} \in \psi^{(ml)}$) are approximately equidistant from the satellite. Under this assumption, the distance to the satellite for the clusters in the main-lobe is:
    \begin{equation}
    d_{(ml)}= h\cos{(40{\degree})} - \sqrt{r_e^2-h^2\sin^{2}{(40{\degree})}}. 
    \label{eq:ML_Dist_Sat}
\end{equation}
\end{note}

\begin{lemma}
With the assumption of $M \sim \textsf{Pois}(\Lambda)$ equidistant $d_{(ml)}$ clusters from the satellite located in SMAP's main-lobe antenna footprint $\mathcal{B}_{(ml)}$, where $\Lambda=\lambda_c\text{v}^2(\mathcal{B}_{(ml)})$ and $\text{v}^2(\cdot)$ is a Lebesgue measure in $\mathbb{R}^2$, the MGF of $T_{(ml)}$ defined in \eqref{eq:T_RFI_ml} is as follows:
\begin{multline}
    M_{(ml)}(t)= \\ \exp \left(40^{2}\lambda_{c} \left(-1+M_{cluster}(t; \; d_{(ml)}, \; g_{(ml)}) \right) \right), \label{eq:MGF_Main_Lobe}
\end{multline}
where $M_{cluster}(t; \; x,\; g)$ is defined in \eqref{eq:MGF_Cluster}.
\end{lemma}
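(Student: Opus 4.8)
The plan is to recognize $T_{(ml)}$ as a compound Poisson sum and collapse it via the probability generating function (PGF) of the Poisson cluster count. First I would invoke the defining property of the homogeneous parent PPP $\Psi$: restricted to the bounded Borel set $\mathcal{B}_{(ml)}$, the number of cluster centers $M := \Psi(\mathcal{B}_{(ml)})$ is $\textsf{Pois}(\Lambda)$ with $\Lambda = \lambda_c\,\text{v}^2(\mathcal{B}_{(ml)})$, and, conditioned on $M=m$, those $m$ centers are i.i.d.\ uniform on $\mathcal{B}_{(ml)}$. Invoking the Note preceding the lemma --- all main-lobe cluster centers are treated as equidistant from the satellite at distance $d_{(ml)}$ given in \eqref{eq:ML_Dist_Sat} --- the per-cluster contributions $T_{cluster}(\mathbf{X}_i)$ appearing in \eqref{eq:T_RFI_ml} become i.i.d.\ copies of a single random variable whose MGF is $M_{cluster}(t;\,d_{(ml)},\,g_{(ml)})$ from \eqref{eq:MGF_Cluster}. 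In particular, the within-cluster offspring randomness $N \sim \textsf{Pois}(\lambda_{BS})$ is already absorbed into $M_{cluster}$ by Lemma 1, and independence across clusters is guaranteed by the TCP construction of $\Phi$.

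Second, I would condition on $M$ and use independence to write $\mathds{E}\!\left[ e^{t T_{(ml)}} \mid M=m \right] = \prod_{i=1}^{m} \mathds{E}\!\left[ e^{t T_{cluster}(\mathbf{X}_i)} \right] = \big( M_{cluster}(t;\,d_{(ml)},\,g_{(ml)}) \big)^{m}$. Averaging over $M \sim \textsf{Pois}(\Lambda)$ then identifies $M_{(ml)}(t)$ with the Poisson PGF $\mathds{E}[z^M] = \sum_{m \ge 0} e^{-\Lambda}\Lambda^m z^m / m! = \exp\!\big(\Lambda(z-1)\big)$ evaluated at $z = M_{cluster}(t;\,d_{(ml)},\,g_{(ml)})$, giving $M_{(ml)}(t) = \exp\!\big( \Lambda\,( M_{cluster}(t;\,d_{(ml)},\,g_{(ml)}) - 1 ) \big)$. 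Finally, since SMAP's main-lobe footprint is the $40 \times 40$ km$^2$ region of the Geometric Assumptions, $\text{v}^2(\mathcal{B}_{(ml)}) = 40^2$ and hence $\Lambda = 40^2\lambda_c$, which is exactly \eqref{eq:MGF_Main_Lobe}.

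I expect the only genuinely delicate step to be the equidistance reduction, which is where tractability is bought at the cost of a small approximation. Exactly, each $T_{cluster}(\mathbf{X}_i)$ depends on $\lVert \mathbf{X}_i - \mathtt{h} \rVert$, so the true MGF is the PPP probability generating functional $\exp\!\big( -\lambda_c\!\int_{\mathcal{B}_{(ml)}} \big( 1 - M_{cluster}(t;\,\lVert \mathtt{x} - \mathtt{h} \rVert,\,g_{(ml)}) \big)\, d\mathtt{x} \big)$; replacing $\lVert \mathtt{x} - \mathtt{h} \rVert$ by the constant $d_{(ml)}$ over $\mathcal{B}_{(ml)}$ --- legitimate because the footprint diameter ($\sim 40$ km) is negligible against $d_{min} = 685$ km --- pulls the integrand out as a constant and reduces the functional to the closed form above. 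I would make this approximation explicit in the proof and remark that the resulting error is negligible for SMAP's geometry, consistent with the Note that motivates \eqref{eq:ML_Dist_Sat}.
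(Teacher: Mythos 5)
Your proof is correct and follows essentially the same route the paper takes: treat the main-lobe clusters as a Poisson number $M\sim\textsf{Pois}(\Lambda)$ of i.i.d.\ contributions made equidistant at $d_{(ml)}$ by the Note, so that $\mathds{E}[e^{tT_{(ml)}}]$ collapses via the Poisson PGF to $\exp\bigl(\Lambda\,(M_{cluster}(t;\,d_{(ml)},\,g_{(ml)})-1)\bigr)$ with $\Lambda=\lambda_c\mathrm{v}^2(\mathcal{B}_{(ml)})=40^2\lambda_c$. Your closing remark identifying this as the equidistance approximation of the exact PGFL over $\mathcal{B}_{(ml)}$ (the same functional used in the side-lobe proof) is a accurate and welcome clarification, not a deviation.
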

\begin{proof}
    Refer to Appendix A.
\end{proof}

\subsubsection{SMAP's side-lobe} In this section, we investigate the maximum RFI brightness temperature $T_{(sl)}$ on SMAP's side-lobe defined in \eqref{eq:T_RFI}, which we can rewrite as:
\begin{equation}
    T_{(sl)}=\sum\nolimits_{\bold{X}_i \in \Psi_{(sl)}} T_{cluster}(\bold{X}_i), \label{eq:T_RFI_sl}
\end{equation}
where $T_{cluster}(\bold{X}_i)$ is defined in \eqref{eq:T_cluster}.
\begin{lemma}
    The MGF of \eqref{eq:T_RFI_sl} is as:
    \begin{align}
       & \hspace{-4mm}M_{(sl)}(t)=\nonumber \\
       & \hspace{-5mm} \exp \hspace{-.8mm}{ \left(\hspace{-.6mm}-2\pi\hspace{-.6mm}\left(\frac{r_e}{h}\right)\hspace{-.6mm} \lambda_c\hspace{-.6mm} \int_{d_{min}}^{d_{max}\hspace{-.6mm}}\hspace{-.6mm}\hspace{-.6mm} \left(1-M_{cluster}(t; \;x,\;g_{(sl)}) \right)x\,dx \hspace{-.6mm}\right)} \label{eq:MGF_Side_Lobe}\hspace{-5mm}
    \end{align}
    where $M_{cluster}(t; \;x, \; g)$ is defined in \eqref{eq:MGF_Cluster}, and $d_{min}=h-r_e$ and $d_{max}=\sqrt{h^2 - r^2_e}$.
\end{lemma}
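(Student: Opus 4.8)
The plan is to view $T_{(sl)}$ in \eqref{eq:T_RFI_sl} as the grand total of an independently marked Poisson point process and then collapse the resulting surface integral over $\mathcal{B}_{(sl)}$ to a one-dimensional integral in the cluster--satellite distance. First I would condition on the parent process: the side-lobe cluster centers $\Psi_{(sl)}$ form a PPP on $\mathcal{B}_{(sl)}$ with intensity measure $\lambda_c\,\text{v}^2(d\mathtt{x})$, and each center $\mathtt{x}$ carries the independent mark $T_{cluster}(\mathtt{x})$ of \eqref{eq:T_cluster}, whose law depends on $\mathtt{x}$ only through $x=\lVert\mathtt{x}-\mathtt{h}\rVert$ and whose conditional MGF is, by Lemma 1, exactly $M_{cluster}(t;x,g_{(sl)})$. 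Applying Campbell's theorem for the MGF of a sum over an independently marked PPP (equivalently, the PPP's probability generating functional with $v=e^{t(\cdot)}$) gives
\begin{equation*}
M_{(sl)}(t)=\exp\!\left(\lambda_c\int_{\mathcal{B}_{(sl)}}\!\big(M_{cluster}(t;\lVert\mathtt{x}-\mathtt{h}\rVert,g_{(sl)})-1\big)\,\text{v}^2(d\mathtt{x})\right),
\end{equation*}
which is the spatially non-uniform counterpart of the Poisson-PGF computation used for the main lobe in Appendix A, where all clusters are equidistant and the integral degenerates into the factor $40^2\lambda_c$ appearing in \eqref{eq:MGF_Main_Lobe}.

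Next I would perform the change of variables $\mathtt{x}\mapsto x=\lVert\mathtt{x}-\mathtt{h}\rVert$. Parametrizing $\mathcal{B}$ by spherical coordinates with polar angle $\theta$ measured from the Earth--satellite axis, the surface element is $\text{v}^2(d\mathtt{x})=r_e^2\sin\theta\,d\theta\,d\phi$, and the law of cosines in the triangle with vertices $(0,0,0)$, $\mathtt{x}$, $\mathtt{h}$ yields $x^2=h^2+r_e^2-2h r_e\cos\theta$, hence $x\,dx=h r_e\sin\theta\,d\theta$. Substituting and integrating $\phi$ over $[0,2\pi)$ turns the area element into $2\pi(r_e/h)\,x\,dx$, while $\theta$ sweeping the visible cap $[0,\vartheta]$ corresponds to $x$ running from $d_{min}=h-r_e$ (sub-satellite point) to $d_{max}=\sqrt{h^2-r_e^2}$ (horizon). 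I would then invoke the approximation $\mathcal{B}_{(sl)}\approx\mathcal{B}$, which is legitimate because $\mathcal{B}_{(ml)}$ is a $\sim 40\times 40$ km$^2$ patch, negligible against the entire visible cap, and whose inclusion only slightly overestimates the RFI, consistent with the conservative bounds adopted in Section \ref{sec:Preliminaries}. The integral then becomes $2\pi(r_e/h)\lambda_c\int_{d_{min}}^{d_{max}}\!\big(M_{cluster}(t;x,g_{(sl)})-1\big)x\,dx$, and writing $M_{cluster}-1=-(1-M_{cluster})$ reproduces \eqref{eq:MGF_Side_Lobe} verbatim.

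Finally I would record the mild regularity needed for the marked-PPP MGF identity to apply: since $g_{(sl)}$ is extremely small and $x\ge d_{min}>0$ over a compact interval, $M_{cluster}(t;x,g_{(sl)})-1$ is continuous and bounded there for every fixed $t$, so the integral is finite and $M_{(sl)}(t)$ is well defined for all $t\in\mathbb{R}$. I expect the only genuinely delicate steps to be the Jacobian computation of the surface-to-radial change of variables and a careful statement of the $\mathcal{B}_{(sl)}\approx\mathcal{B}$ approximation; the probabilistic core is the standard PPP functional argument, already carried out for the main lobe in Appendix A, combined with the cluster-level MGF from Lemma 1.
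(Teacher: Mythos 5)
Your proposal is correct and follows essentially the same route as the paper's Appendix B: express $M_{(sl)}(t)$ as the Poisson PGFL of the conditional cluster MGF $M_{cluster}(t;\lVert\mathtt{x}-\mathtt{h}\rVert,g_{(sl)})$ over $\mathcal{B}_{(sl)}$, then use spherical coordinates and the law of cosines (giving $x\,dx=hr_e\sin\theta\,d\theta$) to reduce the surface integral to the radial integral from $d_{min}$ to $d_{max}$. The only difference is cosmetic: you state explicitly the $\mathcal{B}_{(sl)}\approx\mathcal{B}$ approximation and the finiteness of the integral, points the paper's proof leaves implicit.
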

\begin{proof}
Refer to Appendix B.    
\end{proof}

\subsection{Cumulants of RFI Brightness Temperature on SMAP's Main- and Side- Lobes}
Now that we have the MGFs of $T_{(ml)}$ and $T_{(sl)}$, we are able to acquire their cumulants. 
\begin{lemma} The $n$th cumulant of RFI brightness temperature $T_{(ml)}$ on SMAP's main-lobe,  defined in \eqref{eq:T_RFI_ml}, is as follows: 
\begin{equation}
     k_{n}^{(ml)}=40^{2} g_{(ml)}^n \eta^n \omega^{n\alpha} \lambda_cp_n(\lambda_{BS})d_{(ml)}^{-n\alpha}, \label{eq:kn_ml}
\end{equation}
where $p_{n}(\lambda_{BS})$ is defined in \eqref{eq:p_n}.
\end{lemma}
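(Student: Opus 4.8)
The plan is to work entirely at the level of the cumulant generating function (CGF) and to reuse the power-series expansion of the single-cluster MGF from Lemma 2. First I would take the logarithm of the main-lobe MGF in \eqref{eq:MGF_Main_Lobe}, which by Definition 1 gives the CGF
\[
  K_{(ml)}(t) = \log M_{(ml)}(t) = 40^2 \lambda_c \left( -1 + M_{cluster}(t;\, d_{(ml)},\, g_{(ml)}) \right).
\]
The exponential-of-exponential structure coming from the TCP thus collapses to a single term, and the quantity we are after is $k_n^{(ml)} = K_{(ml)}^{(n)}(0)$.

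Next I would substitute the series expansion \eqref{eq:MGF_Cluster_Series} for $M_{cluster}(t;\, d_{(ml)},\, g_{(ml)})$ and note the crucial bookkeeping fact that its constant ($m=0$) coefficient is $p_0(\lambda_{BS}) = S(0,0) = 1$, which exactly cancels the $-1$ in $K_{(ml)}(t)$ (as it must, since $k_0 = 0$). What remains is the clean series
\[
  K_{(ml)}(t) = 40^2 \lambda_c \sum_{m=1}^{\infty} p_m(\lambda_{BS}) \left( g_{(ml)} \eta \omega^{\alpha} d_{(ml)}^{-\alpha} \right)^m \frac{t^m}{m!}.
\]

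Finally, I would match coefficients against the canonical Taylor expansion $K_{(ml)}(t) = \sum_{n \geq 1} k_n^{(ml)} t^n / n!$: the coefficient of $t^n/n!$ is $k_n^{(ml)} = 40^2 \lambda_c\, p_n(\lambda_{BS}) \left( g_{(ml)} \eta \omega^{\alpha} d_{(ml)}^{-\alpha} \right)^n$, and distributing the $n$th power over the factors yields $40^2 g_{(ml)}^n \eta^n \omega^{n\alpha} \lambda_c\, p_n(\lambda_{BS})\, d_{(ml)}^{-n\alpha}$, which is exactly \eqref{eq:kn_ml}.

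The proof has essentially no hard step; the only point needing a word of justification is the legitimacy of reading the cumulants off as Taylor coefficients of the CGF (equivalently, of differentiating the series term by term). This is valid because $M_{cluster}(t;\, x,\, g) = \exp\!\left(\lambda_{BS}\left(-1 + \exp\!\left(g\eta\omega^{\alpha} x^{-\alpha} t\right)\right)\right)$ is an entire function of $t$, so its power series — and hence that of $K_{(ml)}$ — converges on all of $\mathbb{R}$. Everything else is direct substitution from Lemma 2 and Lemma 3.
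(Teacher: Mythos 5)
Your proposal is correct and follows essentially the same route the paper intends: its one-line proof invokes the main-lobe MGF \eqref{eq:MGF_Main_Lobe} together with Definition 1, and the series expansion of Lemma 2 was stated precisely to enable reading the cumulants off the CGF as you do. Your coefficient matching, including the cancellation of the $-1$ by $p_0(\lambda_{BS})=1$ and the term-by-term justification via entirety of $M_{cluster}$, fills in the details consistently with \eqref{eq:kn_ml}.
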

\begin{proof}
    The cumulants of $T_{(ml)}$ can be acquired using its MGF as defined in \eqref{eq:MGF_Main_Lobe} and definition 1.
\end{proof}

\begin{corollary} The expected value of $T_{(ml)}$ defined in \eqref{eq:T_RFI_ml} is:
\begin{equation}
\mathds{E}[T_{(ml)}]=40^2 g_{(ml)} \eta \omega^{\alpha} \lambda_c\lambda_{BS}d_{(ml)}^{-\alpha}.   \label{eq:T_ml_average} 
\end{equation}
\end{corollary}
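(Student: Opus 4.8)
The plan is to read off $\mathds{E}[T_{(ml)}]$ as the first cumulant of $T_{(ml)}$, reusing the machinery already in place rather than computing a fresh expectation. By Remark 1, $k_1^{(ml)} = \mu_0 = \mathds{E}[T_{(ml)}]$, so it suffices to specialize the general cumulant formula of Lemma 5 to $n=1$.

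Setting $n=1$ in \eqref{eq:kn_ml} gives $k_1^{(ml)} = 40^2\, g_{(ml)}\, \eta\, \omega^{\alpha}\, \lambda_c\, p_1(\lambda_{BS})\, d_{(ml)}^{-\alpha}$, so the only remaining task is to evaluate $p_1(\lambda_{BS})$ from its definition \eqref{eq:p_n}: $p_1(\lambda_{BS}) = \sum_{i=0}^{1} S(1,i)\lambda_{BS}^i = S(1,0) + S(1,1)\lambda_{BS}$. Using the standard values of the Stirling numbers of the second kind, $S(1,0)=0$ and $S(1,1)=1$, we obtain $p_1(\lambda_{BS}) = \lambda_{BS}$; substituting back produces exactly \eqref{eq:T_ml_average}.

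Alternatively — and this is the route implicit in the terse proof given after the corollary — one can bypass Lemma 5 and differentiate the CGF $K_{(ml)}(t) = \log M_{(ml)}(t)$ directly at $t=0$. Plugging the series \eqref{eq:MGF_Cluster_Series} for $M_{cluster}$ into \eqref{eq:MGF_Main_Lobe}, the exponent becomes $40^2\lambda_c\sum_{n\ge1} p_n(\lambda_{BS})(g_{(ml)}\eta\omega^\alpha d_{(ml)}^{-\alpha})^n t^n/n!$; since the constant term vanishes, $K_{(ml)}$ equals this sum, and its coefficient of $t^1$ is precisely $k_1^{(ml)} = 40^2\lambda_c\, p_1(\lambda_{BS})\, g_{(ml)}\eta\omega^\alpha d_{(ml)}^{-\alpha}$, giving the same answer.

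There is essentially no obstacle here: the statement is a routine $n=1$ specialization of Lemma 5, and the only ``computation'' is recalling that $S(1,1)=1$ and $S(1,0)=0$ so that $p_1(\lambda_{BS})=\lambda_{BS}$. The one point worth a sanity check is the $40^2$ prefactor, which comes from the Lebesgue measure $\mathrm{v}^2(\mathcal{B}_{(ml)})$ of the $40\times40$ km$^2$ footprint absorbed into $\Lambda=\lambda_c\,\mathrm{v}^2(\mathcal{B}_{(ml)})$ in Lemma 4; this is already consistent, so nothing new needs to be verified.
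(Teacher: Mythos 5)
Your proposal is correct and follows the paper's own route: the paper likewise obtains \eqref{eq:T_ml_average} by invoking Remark 1 to identify $\mathds{E}[T_{(ml)}]$ with the first cumulant and setting $n=1$ in \eqref{eq:kn_ml}, with your evaluation of $p_1(\lambda_{BS})=S(1,0)+S(1,1)\lambda_{BS}=\lambda_{BS}$ simply making explicit a step the paper leaves implicit. (Only a cosmetic slip: the measure $\Lambda=\lambda_c\,\mathrm{v}^2(\mathcal{B}_{(ml)})$ giving the $40^2$ prefactor appears in Lemma 3, not Lemma 4.)
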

\begin{proof}
    Based on remark 1, the expected value of $T_{ml}$ is its first cumulant defined in \eqref{eq:kn_ml}.
\end{proof}

\begin{corollary} The variance of $T_{(ml)}$ defined in \eqref{eq:T_RFI_ml} is as:
\begin{equation}
Var[T_{(ml)}]=40^2 g_{(ml)}^2 \eta^2 \omega^{2\alpha} \lambda_c (\lambda_{BS}^2+\lambda_{BS})d_{(ml)}^{-2\alpha}.    \label{eq:T_ml_STD}
\end{equation}
\end{corollary}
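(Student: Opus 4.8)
The plan is to reduce everything to Remark 1 together with Lemma 5. Remark 1 states that the variance of a random variable equals its second central moment, which in turn equals its second cumulant ($k_2 = \mu_2$). Hence $Var[T_{(ml)}] = k_2^{(ml)}$, and I would simply specialize the closed-form cumulant expression \eqref{eq:kn_ml} to $n=2$, obtaining
$$Var[T_{(ml)}] = 40^2\, g_{(ml)}^2\, \eta^2\, \omega^{2\alpha}\, \lambda_c\, p_2(\lambda_{BS})\, d_{(ml)}^{-2\alpha}.$$
It then remains only to evaluate the polynomial $p_2(\lambda_{BS})$ defined in \eqref{eq:p_n}.

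Next I would compute $p_2(\lambda_{BS}) = \sum_{i=0}^{2} S(2,i)\,\lambda_{BS}^{\,i}$ using the elementary values of the Stirling numbers of the second kind, namely $S(2,0)=0$, $S(2,1)=1$, and $S(2,2)=1$. This gives $p_2(\lambda_{BS}) = \lambda_{BS} + \lambda_{BS}^2$. Substituting back into the displayed expression yields exactly \eqref{eq:T_ml_STD}. This mirrors the argument used for the expectation in Corollary 2, where $p_1(\lambda_{BS}) = S(1,1)\lambda_{BS} = \lambda_{BS}$.

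I do not expect any genuine obstacle; the only point requiring care is the Stirling-number convention (that $S(2,0)$ is taken to be $0$), but this is the same convention underlying Lemma 4, so it is consistent. As an independent check, one can differentiate the CGF $\log M_{(ml)}(t) = 40^2\lambda_c\big(-1 + M_{cluster}(t; d_{(ml)}, g_{(ml)})\big)$ twice at $t=0$ directly from \eqref{eq:MGF_Main_Lobe}: because $M_{cluster}$ is the MGF of a Poisson$(\lambda_{BS})$ count scaled by $g_{(ml)}\eta\omega^{\alpha} d_{(ml)}^{-\alpha}$, the second derivative at zero produces the second moment of that Poisson variable, i.e.\ the factor $\lambda_{BS} + \lambda_{BS}^2$, confirming the stated variance.
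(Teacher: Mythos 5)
Your proposal is correct and follows essentially the same route as the paper: invoke Remark 1 to identify $Var[T_{(ml)}]$ with the second cumulant, then specialize \eqref{eq:kn_ml} to $n=2$, with $p_2(\lambda_{BS})=\lambda_{BS}+\lambda_{BS}^2$ from the Stirling numbers. Your explicit evaluation of $p_2$ and the CGF cross-check are just slightly more detailed than the paper's one-line argument.
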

\begin{proof}
Based on remark 1, the variance of $T_{(ml)}$ is its \MK{second} cumulant defined in \eqref{eq:kn_ml}.
\end{proof}

\begin{lemma}
    The $n$th cumulants of the RFI brightness temperature $T_{(sl)}$ on SMAP's side-lobe, defined in \eqref{eq:T_RFI_sl}, is as follows:
    \begin{multline}
        k_n^{(sl)}= \\ \frac{2\pi}{2-n\alpha}\left(\frac{r_e}{h}\right) g_{(sl)}^n \eta^n \omega^{n\alpha} \lambda_c p_n(\lambda_{BS}) \left(d_{max}^{2-n\alpha} - d_{min}^{2-n\alpha}\right) \label{eq:kn_sl},
    \end{multline}
    where $p_{n}(\lambda_{BS}$) is defined in \eqref{eq:p_n}.
\end{lemma}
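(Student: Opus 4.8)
The plan is to read the cumulants directly off the cumulant generating function, as prescribed by Definition~1. Taking the logarithm of the side-lobe MGF in \eqref{eq:MGF_Side_Lobe}, the outer exponential cancels — this is precisely why one works with the CGF rather than the MGF — leaving
\[
K_{(sl)}(t) = -2\pi\left(\frac{r_e}{h}\right)\lambda_c \int_{d_{min}}^{d_{max}} \bigl(1 - M_{cluster}(t;\, x,\, g_{(sl)})\bigr)\, x\, dx .
\]
Into this I would substitute the series expansion \eqref{eq:MGF_Cluster_Series} of $M_{cluster}(t;\, x,\, g_{(sl)})$. Since $p_0(\lambda_{BS}) = S(0,0) = 1$, the $n=0$ term of that expansion equals $1$ and cancels the leading $1$ in the integrand, so the integrand becomes $-\sum_{n\ge 1} p_n(\lambda_{BS})\, g_{(sl)}^n \eta^n \omega^{n\alpha}\, x^{1-n\alpha}\, t^n/n!$.

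Next I would interchange the sum with the radial integral — legitimate because $M_{cluster}(t;\,x,\,g_{(sl)})$ is entire in $t$ and $[d_{min},d_{max}]$ is compact with $d_{min}>0$, so the series converges uniformly in $x$ on that interval and $K_{(sl)}$ is represented by the resulting power series in a neighbourhood of $t=0$. This reduces everything to evaluating $\int_{d_{min}}^{d_{max}} x^{1-n\alpha}\, dx$ for each $n\ge 1$. Here the hypothesis $\alpha>2$ enters: it forces $2-n\alpha<0$ for every $n\ge 1$, in particular $2-n\alpha\neq 0$, so no logarithmic exceptional case arises and the integral equals $(d_{max}^{2-n\alpha}-d_{min}^{2-n\alpha})/(2-n\alpha)$. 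Collecting the prefactors, the coefficient of $t^n/n!$ in $K_{(sl)}(t)$ is exactly the right-hand side of \eqref{eq:kn_sl}; by Definition~1 that coefficient is $k_n^{(sl)} = K_{(sl)}^{(n)}(0)$, which proves the lemma. (As a sanity check, the series starts at $n=1$, so $k_0^{(sl)}=K_{(sl)}(0)=0$, as it must be.)

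I do not expect a genuine obstacle: the argument is a term-by-term identification of a power series. The only two points deserving a sentence of care are the vanishing of the $n=0$ term (the index shift produced by the cancellation with the leading $1$) and the explicit invocation of $\alpha>2$ to guarantee the radial integral is finite with the stated closed form. One could instead differentiate \eqref{eq:MGF_Side_Lobe} $n$ times under the integral sign and set $t=0$, but the power-series route avoids the combinatorial blow-up of derivatives of a composition and keeps the bookkeeping transparent.
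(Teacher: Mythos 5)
Your proposal is correct and follows essentially the same route as the paper: the paper's proof is a one-line appeal to the side-lobe MGF \eqref{eq:MGF_Side_Lobe} together with Definition~1, and your computation (take the CGF, insert the series expansion \eqref{eq:MGF_Cluster_Series} from Lemma~2, swap sum and integral, and evaluate $\int_{d_{min}}^{d_{max}} x^{1-n\alpha}\,dx$ using $\alpha>2$ so that $2-n\alpha\neq 0$) is exactly the calculation the paper leaves implicit. Your explicit handling of the $n=0$ cancellation and of the finiteness of the radial integral simply fills in details the paper omits.
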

\begin{proof}
The cumulants of $T_{(sl)}$ can be acquired using its MGF as defined in \eqref{eq:MGF_Side_Lobe} and definition 1.
\end{proof}

\begin{corollary} The expected value of $T_{(sl)}$ defined in \eqref{eq:T_RFI_sl} is:
    \begin{multline}
        \mathds{E}[T_{(sl)}]= \\ \frac{2\pi}{2-\alpha}\left(\frac{r_e}{h}\right) g_{(sl)} \eta \omega^{\alpha} \lambda_c \lambda_{BS} \left(d_{max}^{2-\alpha} - d_{min}^{2-\alpha}\right). \label{eq:T_sl_average}
    \end{multline}
\end{corollary}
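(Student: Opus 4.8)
The plan is to read off $\mathds{E}[T_{(sl)}]$ as the first cumulant of $T_{(sl)}$, exactly as was done for the main-lobe mean in Corollary 3. By Remark 1, the expectation of a random variable equals its first cumulant $k_1$, so it suffices to specialize the general side-lobe cumulant formula \eqref{eq:kn_sl} to $n=1$. Substituting $n=1$ there immediately reproduces the prefactor $\frac{2\pi}{2-\alpha}\left(\frac{r_e}{h}\right)$, the first powers $g_{(sl)}\,\eta\,\omega^{\alpha}\,\lambda_c$, and the radial factor $d_{max}^{2-\alpha}-d_{min}^{2-\alpha}$; the only quantity that requires a moment of care is $p_1(\lambda_{BS})$.

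Next I would evaluate $p_1(\lambda_{BS})$ from its definition \eqref{eq:p_n}: $p_1(\lambda_{BS})=\sum_{i=0}^{1}S(1,i)\,\lambda_{BS}^i=S(1,0)+S(1,1)\,\lambda_{BS}$. Using the standard values $S(1,0)=0$ and $S(1,1)=1$ for the Stirling numbers of the second kind, this collapses to $p_1(\lambda_{BS})=\lambda_{BS}$. Plugging this into the $n=1$ instance of \eqref{eq:kn_sl} yields precisely \eqref{eq:T_sl_average}. As a sanity check, since $\alpha>2$ we have $2-\alpha<0$ and $d_{max}>d_{min}$, so $d_{max}^{2-\alpha}-d_{min}^{2-\alpha}<0$ while $\frac{2\pi}{2-\alpha}<0$ as well, making the overall expression positive, as a brightness temperature must be.

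There is essentially no obstacle: the corollary is a one-line specialization of Lemma 6. If instead one wanted a self-contained derivation that does not invoke that lemma, the alternative is to differentiate $\log M_{(sl)}(t)$ from \eqref{eq:MGF_Side_Lobe} once and evaluate at $t=0$. This requires interchanging the $t$-derivative with the radial integral, which is justified because the integrand $\bigl(1-M_{cluster}(t;x,g_{(sl)})\bigr)x$ is smooth in $t$ on a neighborhood of $0$ uniformly over the compact domain $[d_{min},d_{max}]$; one then uses $\left.\tfrac{d}{dt}M_{cluster}(t;x,g_{(sl)})\right|_{t=0}=g_{(sl)}\eta\omega^{\alpha}x^{-\alpha}\lambda_{BS}$ from \eqref{eq:MGF_Cluster} (or equivalently the $n=1$ term of the series \eqref{eq:MGF_Cluster_Series}), and finally computes $\int_{d_{min}}^{d_{max}} x^{1-\alpha}\,dx=\frac{d_{max}^{2-\alpha}-d_{min}^{2-\alpha}}{2-\alpha}$. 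Either route is routine, so the proof can simply cite Remark 1, \eqref{eq:kn_sl}, and the values of $S(1,0),S(1,1)$.
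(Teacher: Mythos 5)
Your proposal is correct and matches the paper's own argument: the paper likewise proves this corollary by citing Remark 1 and setting $n=1$ in the side-lobe cumulant formula \eqref{eq:kn_sl}, with $p_1(\lambda_{BS})=\lambda_{BS}$. Your extra details (the Stirling-number evaluation, the sign check, and the alternative derivation via differentiating $\log M_{(sl)}(t)$) are sound but not needed beyond the one-line specialization the paper uses.
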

\begin{proof}
     Based on remark 1, the expected value of $T_{(sl)}$ is its first cumulant defined in \eqref{eq:kn_sl}.
\end{proof}

\begin{corollary} The variance of $T_{(ml)}$ defined in \eqref{eq:T_RFI_ml} is:
    \begin{multline}
        Var[T_{(sl)}]= \\ \frac{2\pi}{2-2\alpha}\left(\frac{r_e}{h}\right) g_{(sl)}^{2} \eta^2 \omega^{2\alpha} \lambda_c (\lambda_{BS}^2+\lambda_{BS}) \left(d_{max}^{2-2\alpha} - d_{min}^{2-2\alpha}\right). \label{eq:T_sl_STD}
    \end{multline}
\end{corollary}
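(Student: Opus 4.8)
The plan is to specialize Lemma~5 to the case $n=2$ and then invoke Remark~1, which identifies the variance of a random variable with its second cumulant. Hence $\mathrm{Var}[T_{(sl)}] = k_2^{(sl)}$, and it only remains to evaluate the right-hand side of \eqref{eq:kn_sl} at $n=2$.

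The single nontrivial ingredient is the coefficient $p_2(\lambda_{BS})$ from \eqref{eq:p_n}. Using the standard values of the Stirling numbers of the second kind, $S(2,0)=0$, $S(2,1)=1$, and $S(2,2)=1$, we obtain $p_2(\lambda_{BS}) = \lambda_{BS} + \lambda_{BS}^2$. Substituting this, together with $n=2$, into \eqref{eq:kn_sl} yields \eqref{eq:T_sl_STD} immediately.

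I do not anticipate any real obstacle here: the statement is a one-line corollary of Lemma~5, and the only point requiring care is correctly reading off $p_2$ from the Stirling-number sum. As a consistency check, note that the prefactor $\tfrac{2\pi}{2-2\alpha}$ is negative for $\alpha>2$, but this is compensated by $d_{max}^{2-2\alpha}-d_{min}^{2-2\alpha}<0$, since $d_{max}>d_{min}$ and $2-2\alpha<0$; thus the expression is positive, as a variance must be. Alternatively, one could derive $k_2^{(sl)}$ directly by differentiating $\log M_{(sl)}(t)$ twice and setting $t=0$, using the series expansion \eqref{eq:MGF_Cluster_Series} of $M_{cluster}$ to extract the $t^2$ coefficient inside the integral in \eqref{eq:MGF_Side_Lobe}; this reproduces the same formula and simultaneously serves as a sanity check on Lemma~5.
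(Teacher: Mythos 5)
Your proof is correct and follows essentially the same route as the paper: the paper's own argument is simply to invoke Remark~1 (variance equals the second cumulant) and read off $k_2^{(sl)}$ from Lemma~5, with $p_2(\lambda_{BS})=\lambda_{BS}+\lambda_{BS}^2$ exactly as you computed. Your sign-positivity check and the alternative derivation via $\log M_{(sl)}(t)$ are sensible extras but not needed beyond what the paper does.
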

\begin{proof}
    Based on remark 1, the variance of $T_{(sl)}$ is its \MK{second} cumulant defined in \eqref{eq:kn_sl}.
\end{proof}

\section{Numerical Results} \label{sec:Results}
In this section, we evaluate the RFI brightness temperature statistics at both SMAP's main- and side-lobes based on the analysis in the previous sections. The simulation parameters are given in Table \ref{tab:SG_Params}. Based on this table, an average of one cluster exists in every $10000$ $km^2$ and, on average, $2500$ BS clusters exist in the area exposed to the satellite. We also set SMAP's side-lobe gain to a conservative $-55$ dB.

\begin{table}[]
\centering
\caption{LIST OF CONFIGURATION PARAMETERS FOR PERFORMANCE EVALUATION}
\label{tab:SG_Params}
\begin{tabular}{|c|c|}
\hline
\textbf{Element} & \textbf{Value}                                                                              \\ \hline
Intensity of clusters ($\lambda_C$)      & \begin{tabular}[c]{@{}c@{}}$1$ cluster (large city) \\ every $10000$ $km^{2}$\end{tabular}  \\ \hline
Intensity of active BS ($\lambda_{BS}$)   & \begin{tabular}[c]{@{}c@{}}$50$, $100$, $200$\\ BSs per cluster\end{tabular} \\ \hline
Path loss exponent ($\alpha$)         & {(}2 , 2.2{]}                                                                               \\ \hline
BS transmission power ($p_{tx}$)                & $3.5$ Watts per BS    \\ \hline
Boltzmann's constant ($k_b$) & \begin{tabular}[c]{@{}c@{}} $1.380649 \times 10^{-23}$\\ m$^{2}$kg s$^{-2}$K$^{-1}$ \end{tabular} \\ \hline
Light speed ($c$) & $300,000$ $km$ $s^{-1}$ \\ \hline
SMAP's main-lobe antenna gain ($g_{(ml)}$)            & $0$ dB                                                                                    \\ \hline
SMAP's side-lobe antenna gain ($g_{(sl)}$)            & $-55$ dB                                                                                    \\ \hline
Central Carrier frequency of BS ($f$)                & $1.413$ GHz                                                                                                                                                           \\ \hline
BS transmission bandwidth ($\beta$)                & $24$ MHz                                                                                    \\ \hline
Earth radius ($r_e$)                & $6371$ km                                                                                   \\ \hline
SMAP's distance to Earth's center ($h$)                & $7056$ km                                                                                   \\ \hline
Min. possible distance to SMAP ($d_{\min}$)        & $h-r=685$ km                                                                                    \\ \hline
Max. possible distance to SMAP ($d_{\max}$)        & $\sqrt{h^2-r^2}=3032.7$ km                                                                            \\ \hline
\end{tabular}
\end{table}

\begin{figure*}
  \centering
    \centering
    \begin{subfigure}{0.44\textwidth}
      \centering
      \includegraphics[width=\linewidth]{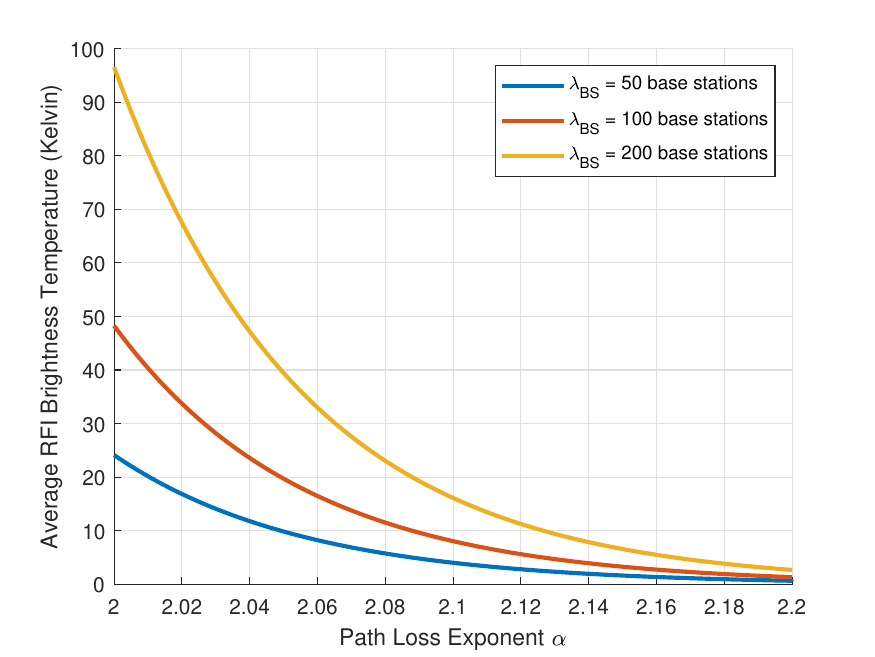}
      \caption{Main-lobe average RFI brightness temperature.}
      \label{subfig:1}
    \end{subfigure}
    \begin{subfigure}{0.44\textwidth}
      \centering
      \includegraphics[width=\linewidth]{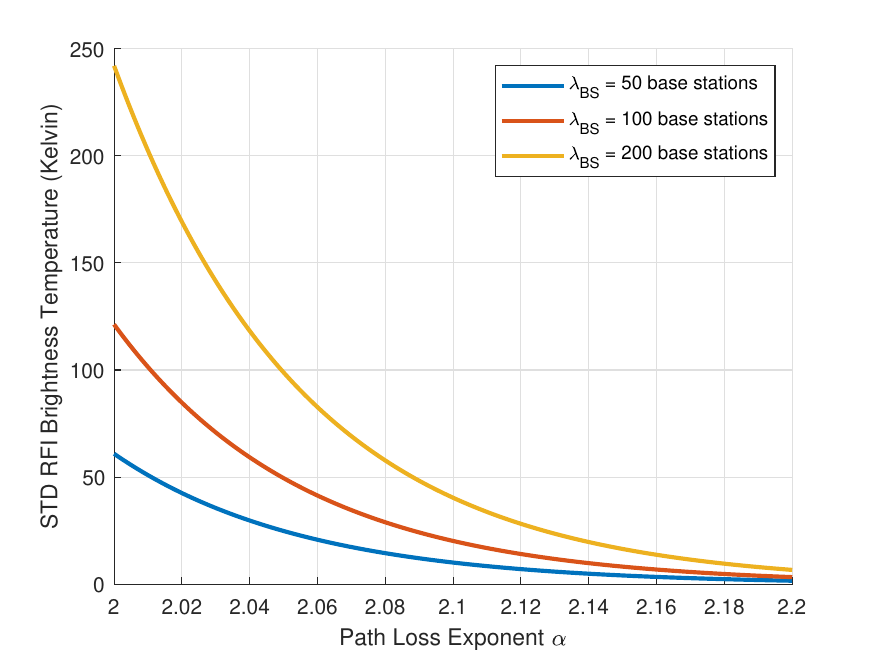}
      \caption{Main-lobe STD of RFI brightness temperature.}
      \label{subfig:2}
    \end{subfigure}
    
    \centering
    \begin{subfigure}{0.44\textwidth}
      \centering
      \includegraphics[width=\linewidth]{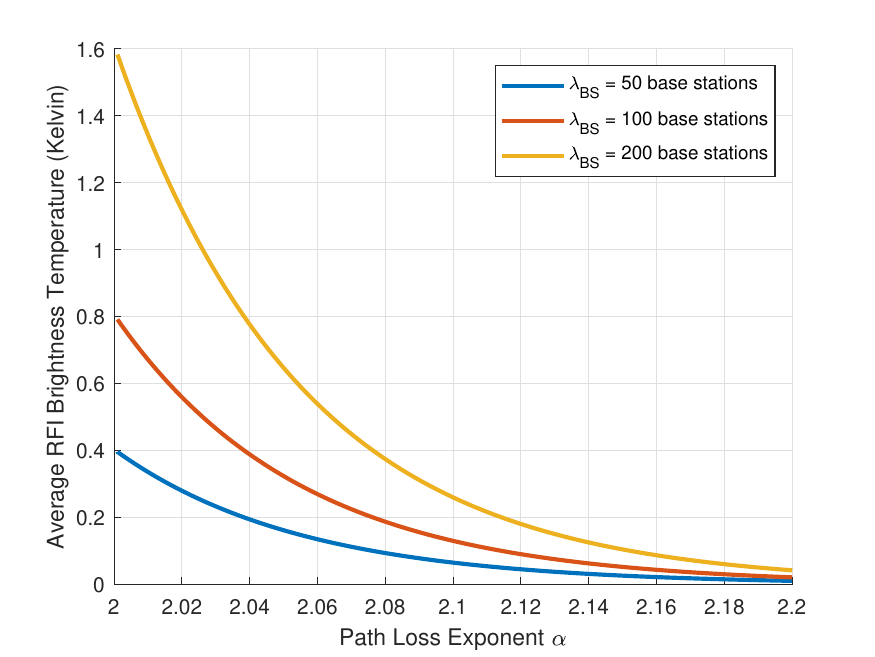}
      \caption{Side-lobe average RFI brightness temperature}
      \label{subfig:3}
    \end{subfigure}
    \begin{subfigure}{0.44\textwidth}
      \centering
      \includegraphics[width=\linewidth]{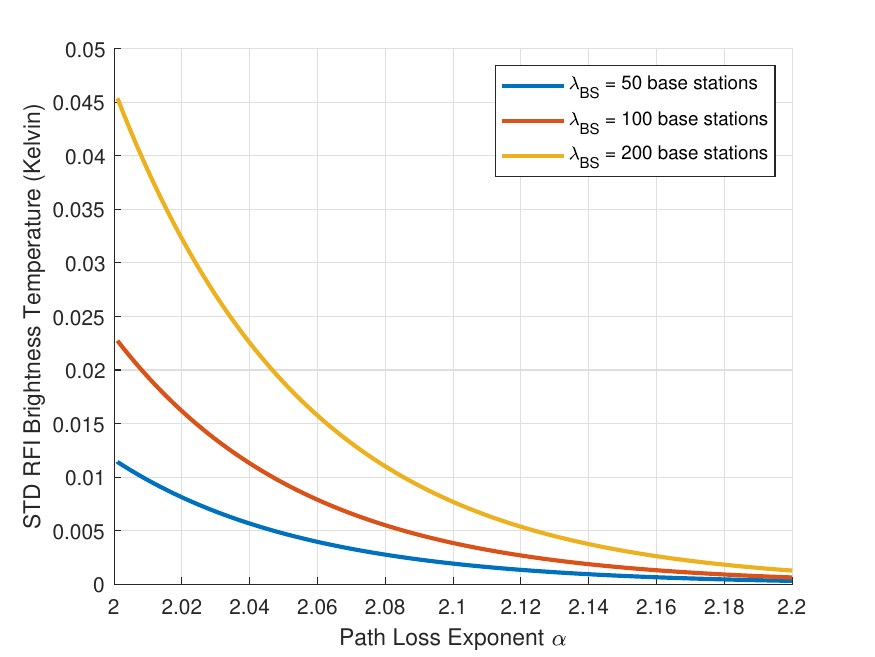}
      \caption{Side-lobe STD of RFI brightness temperature.}
      \label{subfig:4}
    \end{subfigure}
  \caption{Average and Standard Deviation of RFI brightness temperature on SMAP's main- and side-lobes for average $\lambda_{BS}=50, \;100,\; 200$ active BSs in a cluster. On average $125,\; 250$ and $500$ thousand base stations are exposed to SMAP's side-lobe, while only, on average, $7, \; 14$ and $28$ base stations are exposed to SMAP's main-lobe.}
  \label{fig:Exc-Prob}
\end{figure*}



Due to space limitations, we only assess the average and standard deviation (STD) of RFI brightness temperature at SMAP's main- and side-lobes. We use these metrics to determine if the values fall within an acceptable error range. Figure \ref{fig:Exc-Prob} illustrates the findings. In Figure \ref{subfig:1}, the average RFI brightness temperature at SMAP's main-lobe exceeds the acceptable threshold $\tau=1.3K$, ranging from $20$ to $100$ Kelvin. Figure \ref{subfig:2} shows a high standard deviation in the range of hundreds, indicating extreme uncertainty in RFI. Notably, $7$ to $28$ base stations, on average, contribute to these RFI characteristics. 

Now we focus on RFI brightness temperature on SMAP's side-lobe depicted in Figures \ref{subfig:3} and \ref{subfig:4}. As observable in figure \ref{subfig:3}, for clusters with an average of $50$ and $100$ active BSs, the average RFI brightness temperature falls within the acceptable threshold of $1.3K$, while for clusters of average $200$ active BSs, the average RFI brightness temperature slightly exceeds the limit. Also, from Figure \ref{subfig:4}, we see extremely low STD values for RFI brightness on SMAP's side-lobe, which is an indicator of extremely low uncertainty in RFI.

\section{Conclusion} \label{sec:conclusion}


In this paper, we introduced an innovative method using cluster processes in stochastic geometry to evaluate RFI brightness temperature on passive remote sensing satellites. Our model is tailored to NASA's SMAP satellite, passively operating in the restricted L-band $1400$ to $1427$ MHz. Interestingly, our findings show that\MK{, while avoiding active base stations in main-lobe antenna footprint through satellite positioning,} SMAP's antenna's remarkably low side-lobe gains allow for numerous active co-channel cellular base stations without compromising measurement precision.

\appendices
\section{Proof to Lemma 3}
Since SMAP's main-lobe antenna footprint is roughly $40^2$ $km^2$, we note that $\Lambda=40^2\lambda_c$. Thus, conditioned on $M$ we note that:
\begin{equation}
    M_{(ml)}(t)=\mathds{E}_{M} \left[ \mathds{E} \left[\exp\left(t\sum\nolimits_{M}T_{cluster}(\mathtt{x})\right) \right] \right]. \label{eq:MGF_Main_Lobe_proof}
\end{equation}
The inner $\mathds{E}[\cdot]$ is the MGF of the sum of $M$ clusters defined in \eqref{eq:T_cluster}, and based on \eqref{eq:MGF_Cluster}, can be expressed as:
\begin{multline}
    \mathds{E} \left[\exp\left(t\sum\nolimits_{m}T_{cluster}(\mathtt{x})\right) \right]= \\\left( M_{cluster}(t;\; d_{(ml)}, \;g_{(ml)}) \right)^m.
\end{multline}
Accordingly, \eqref{eq:MGF_Main_Lobe_proof} can be expanded as:
\begin{equation}
    M_{(ml)}(t)=\sum_{m=0}^{\infty}\frac{e^{-\Lambda}\Lambda^m}{m!} \left( M_{cluster}(t;\; d_{(ml)}, \;g_{(ml)}) \right)^m,
\end{equation}
which \MK{is equivalent to} \eqref{eq:MGF_Main_Lobe}.

\section{Proof to Lemma 4} \label{proof:MGF_RFI_SL}
For ease of notation, in \eqref{eq:T_RFI_sl} we define $T_{\bold{X}_i}:=T_{cluster}(\bold{X}_i)$. Accordingly, the MGF of \eqref{eq:T_RFI_sl} is as:
\begin{align}
    M_{(sl)}(t) &=\mathds{E}\left[ e^{tT_{(sl)}}\right] \notag \\
    & = \mathds{E}_{\Psi,\{ T_{\bold{X}_i} \}} \left[
    \prod\nolimits_{\bold{X}_i \in \Psi^{(sl)}} e^{tT_{\bold{X}_i}}
    \right].
\end{align}
Due to the initial assumption of the independence of clusters, we move the expectation with respect to $\{ T_{\bold{X}_i} \}$ inside the product as: 
\begin{equation}
    \mathds{E}_{\Psi_{(sl)}} \left[
         \prod\nolimits_{\bold{X}_i \in \Psi_{(sl)}}
         \mathds{E} \left[ 
           e^{tT_{\bold{X}_i}}
         \right]
    \right],
\end{equation}
which is the \textit{Probability Generating Functional} (PGFL) \cite{sg1} of $g(\mathtt{x})=  \mathds{E}\left[ e^{tT_{\mathtt{x}}} \right]$  over the set $\mathcal{B}_{(sl)}$ and can be written as:
\begin{multline}
\small
    \mathcal{P}_{\Psi_{(sl)}}(g) = \mathds{E}_{\Psi_{(sl)}} \left[ \prod_{\mathrm{X}_{i} \in \Psi_{(sl)}} g(\mathrm{X}_{i}) \right]
    \\ = \exp \left( 
           -\int_{\mathcal{B}_{(sl)}} \left( 1-g(\mathtt{x}) \right) \, \Lambda(d\mathtt{x})
    \right), \label{eq:PGFL}                     
\end{multline}
where, based on Fig. (\ref{fig:smap_exposed}), in spherical coordinates for equidistant points to SMAP for a PPP with intensity $\lambda_c$:
\begin{equation}
    \Lambda \left( d \mathtt{x} \right) = 2 \pi r^2 \lambda_c \sin(\theta) \; d\theta.   \label{eq:Lambda_dX}
\end{equation}
We note that $g(\mathtt{x})$ is the MGF of RFI brightness temperature of a cluster at point $\mathtt{x} \in \mathcal{B}_{(sl)}$ as in \eqref{eq:MGF_Cluster}, which we note here with $g(\mathtt{x})=M_{cluster}(t; \; \lVert \mathtt{x}-\mathtt{h} \lVert, \;g_{(sl)} )$, where $x=\lVert \mathtt{x}-\mathtt{h} \lVert$ is the distance to SMAP. From Fig. (\ref{fig:smap_exposed}), and using the law of cosines, we note that:
\begin{equation}
    x= \left( r_e^2 + h^2 -2hr_e \cos (\theta) \right)^{\frac{1}{2}}, 
\end{equation}
with:
\begin{align}
    &dx = hr_e \sin(\theta) \left( r_e^2 + h^2 -2hr_e \cos (\theta) \right)^{-\frac{1}{2}}\; d\theta \notag \\
    \Rightarrow x\;&dx= hr_e \sin(\theta)\; d\mathtt{\theta}. \label{eq:xdx}
\end{align}
Comparing \eqref{eq:Lambda_dX} and \eqref{eq:xdx}, we note that:
\begin{equation}
    \Lambda \left( d \mathtt{x} \right) = 2 \pi \left( \frac{r_e}{h} \right) \lambda_cx \;dx, \label{eq:lambda_to_x}
\end{equation}
where $x$ is in the range of $d_{max}$ and $d_{min}$. By substituting \eqref{eq:lambda_to_x} in \eqref{eq:PGFL}, we will have \eqref{eq:MGF_Side_Lobe}.

\bibliographystyle{IEEEtran}
\bibliography{refs}

\end{document}